\title{Cold-start Recommendation by Personalized Embedding Region Elicitation}
\author[1,3]{Hieu Trung Nguyen}
\author[1]{Duy Nguyen}
\author[2]{Khoa Doan}
\author[3]{Viet Anh Nguyen}
\affil[1]{%
    VinAI Research
}
\affil[2]{%
    College of Engineering \& Computer Science, VinUniversity
}
\affil[3]{%
    The Chinese University of Hong Kong
}
\newtheorem{theorem}{Theorem}
\newtheorem{definition}{Definition}
\newtheorem{assumption}{Assumption}
\newcommand{\be}{\begin{equation}}
\newcommand{\ee}{\end{equation}}
\newcommand{\bea}{\begin{equation*}\begin{aligned}}
\newcommand{\eea}{\end{aligned}\end{equation*}}
\newcommand{\R}{\mathbb{R}}
\newcommand{\Max}{\max\limits_}
\newcommand{\wh}{\widehat}
\newcommand{\mc}{\mathcal}
\newcommand{\mbb}{\mathbb}
\DeclareMathOperator{\st}{s.t.}
\newcommand{\PSD}{\mathbb{S}_{+}} 
\newcommand{\Let}{\triangleq}
\newcommand{\opt}{^\star}
\newcommand{\NA}{\mathrm{NA}}
\pgfplotsset{compat=1.18}
\begin{document}
\maketitle
\begin{abstract}
Rating elicitation is a success element for recommender systems to perform well at cold-starting, in which the systems need to recommend items to a newly arrived user with no prior knowledge about the user's preference. Existing elicitation methods employ a fixed set of items to learn the user's preference and then infer the users' preferences on the remaining items. Using a fixed seed set can limit the performance of the recommendation system since the seed set is unlikely optimal for all new users with potentially diverse preferences. This paper addresses this challenge using a 2-phase, personalized elicitation scheme. First, the elicitation scheme asks users to rate a small set of popular items in a ``burn-in'' phase. Second, it sequentially asks the user to rate adaptive items to refine the preference and the user's representation. Throughout the process, the system represents the user's embedding value not by a point estimate but by a region estimate. The value of information obtained by asking the user's rating on an item is quantified by the distance from the region center embedding space that contains with high confidence the true embedding value of the user. Finally, the recommendations are successively generated by considering the preference region of the user. We show that each subproblem in the elicitation scheme can be efficiently implemented. Further, we empirically demonstrate the effectiveness of the proposed method against existing rating-elicitation methods on several prominent datasets.
\end{abstract}

\section{Introduction}
The vast amount of available information and content in this digital era poses severe challenges for individuals seeking information and recommendations. Personalized recommender systems are exerting profound impacts on various fields by leveraging user data to generate personalized suggestions, with applications spanning from networks~\citep{ref:natarajan2020resolving, ref:wu2019neural}, e-commerce~\citep{ref:alamdari2020systematic, ref:jiang2019trust} to e-learning~\citep{ref:khanal2020systematic, ref:george2019review}. Personalized recommender systems may produce accurate suggestions for a user's preferences by exploiting users' characteristics and historical interactions with items. However, under the cold-start settings, recommendation models may fail miserably because they can only access a limited, or even no, user's interaction history~\citep{ref:gope2017survey}. Cold-start recommendation can be categorized into three main branches, depending on the specification of the user-item pool: (i) new users, invariant items, (ii) invariant users, new items, and (iii) new users, new items~\citep{ref:gope2017survey}. Introductory materials for the cold-start problems in recommendation systems can be found in~\citet{ref:adomavicius2005toward, ref:schafer2007collaborative}, and extensive literature reviews can be found in~\citet{ref:bobadilla2013recommender, ref:gope2017survey}. We will provide a brief literature review in Section~\ref{sec:related}.

This paper considers the most popular setting of cold-start recommendation wherein a new user arrives, and our goal is to make relevant recommendations to the user from the list of existing (invariant) items. We assume that there is no available side information regarding the new user, and thus, there are no trivial methods to initialize the representation of the user in the system. This situation arises frequently in practically any real-time environment, for example, when a new user signs up for a new account on an online platform.

Despite their effectiveness, existing methods select a fixed set of elicited items to infer a new user’s preferences on the remaining items~\citep{ref:sepliarskaia2018preference}. They effectively approximate the initial embedding of the user as a function of only these seed items, no matter who the user is. This approach works well when the user has broad interests in multiple item categories but could be wasteful for users with narrower interests, resulting in sub-optimal initial recommendations. In practice, one can easily see that most users prefer some specific categories of items; for example, a user may be interested in action, adventurous, and mystery movies but not others such as comedy, drama, or romance. Thus, dynamically selecting or personalizing the seed items for a new user to represent the user's initial embedding can be beneficial in improving the quality of initial recommendations. 

This paper introduces a versatile framework to capture specific events where new users engage with items, denoted as ($+1$) or ($-1$) for positive or negative interactions. This framework can be used in two cases: $\textit{i}$) Predicting whether a user explicitly expresses an affinity for a product/item, signified by ($+1$) for a positive review and ($-1$) for a negative review, such as rating prediction. $\textit{ii}$) Forecasting whether a user undertakes actions implicitly indicating a preference for an item, indicated by ($+1$) when a user makes a purchase or ($-1$) when no interaction with the item is recorded, such as news recommendation~\citep{ref:bae2023lancer} and click-through-rate prediction~\citep{ref:zhang2022deep}. 

\textbf{Contributions.}  To recommend items to new users, we propose a framework that estimates a region in the embedding space highly likely to contain the embedding of a new user. Our approach involves a preference elicitation process for selecting the best items to ask users to rate, considering that no initial information about the users is available. As a new user arrives, we prompt a static, short questionnaire with a carefully selected set of questions using a determinantal point process (DPP). The DPP ensures that the items listed in the questionnaire strike a balance between diversity and popularity (or quality), and the user's feedback on these items will serve as initial information about the users.

Our framework then focuses on constructing a \textit{dynamic} questionnaire personalized to each user and sequentially updates our belief about the user's preference. By formulating and solving a minimization problem, we choose items that effectively narrow down the region in the embedding space where the new user's embedding is most likely to be found. This adaptive approach allows us to gather further information from users while limiting the questions to a relatively small number. Hence, our approach reduces the cognitive load on the user but can guarantee a good localization of the user's embedding simultaneously. 

To enhance the practicality of our work, we introduce a user behavior model that incorporates a probabilistic assessment of whether a user has previous experience with an item. Users can provide feedback, either positively ($+1$) or negatively ($-1$), for items they have experienced. In practical cases where users have not experienced an item or choose to ignore the question, the feedback is $\NA$, and this information is also taken into account to refine the selection of the items to query.

\textbf{Outline.} The subsequent sections unfold as follows: we first introduce the problem settings and describe the user behavior model. Then, we present our solution package highlighting our Personalized Embedding Region Elicitation (PERE) method. The efficacy of our solution package is demonstrated through numerical results in the last section.

\section{Problem Settings} \label{sec:problem}

The recommendation system has a list of $N$ items; each item is represented by a $d$-dimensional embedding vector $v_i \in \R^d$ for $i = 1, \ldots, N$. Additionally, we extract a popularity score for each item based on historical user-item interactions, represented by a normalized number $0 \leq w_i \leq 1$. The items are sorted in descending order of popularity, with $w_{i} \geq w_{i+1}$. The top-$P$ items are classified as \textit{popular}, while the remaining $N - P$ items are considered non-popular. When a new user arrives without prior information, we aim to learn a suitable embedding vector for this user and then utilize this embedding vector for personalized item recommendations. Throughout, we rely on the assumption that the embeddings are of sufficient quality to enable distance-based recommendation methods such as k-nearest-neighbor to perform accurately. We make the following assumption:

\begin{assumption}[Embedding space] \label{a:embedding}
The embeddings are normalized to a $d$-dimensional hypercube $\mathbb{H} = [0, 1]^d$. Moreover, the items' embeddings $v_i \in \mbb H$, $\forall i=1, \ldots, N$ do not change over time. 
\end{assumption}

This assumption imposes a bounded constraint on the embedding space, a common practice for machine learning algorithms. The invariance of item embeddings is also reasonable for most practical online platform systems where the items can be movies, books, or songs.

Our recommendation system employs a three-option feedback mechanism for user interactions. Whenever the user is presented with an item $i$ characterized by an embedding vector $v_i$, the user can rate the item using three options: $-1$, $+1$, or $\NA$. A rating of $-1$ signifies a negative experience or dislike, while a rating of $+1$ indicates a positive experience or liking of the item. The user may also choose $\NA$ to express a lack of experience or refuse to disclose the preference. By employing this interactive scheme, we propose a two-phase preference elicitation process consisting of a burn-in phase and a sequential and adaptive question-answering (Q\&A) phase. The elicitation process aims to extract the user's preferences and learn an appropriate embedding vector to represent the user in the common embedding space. We present the overall flow of our approach in Figure~\ref{fig:flow} and summarize the process in each phase as follows: when a new user arrives, we construct a burn-in questionnaire that consists of $K$ items to ask the user. The user rates $-1$, $+1$, or $\NA$ for each item in this list. By consolidating the responses from the user, we can divide the set $\mc L$ into three subsets: $\mc L^-$, $\mc L^{+}$, and $\mc L^{\NA}$, that represent items disliked, liked, and with no expressed opinion, respectively.\footnote{By construction, $\mc L^-$, $\mc L^{+}$, and $\mc L^{\NA}$ are exhaustive and mutually exclusive: their pairwise intersection is the empty set, and their union is $\mc L$.} 

Subsequently, we further facilitate the elicitation of user preferences through an adaptive Q\&A process. Our system sequentially presents to the user $k$ new items in each round, and the user provides feedback ($-1$, $+1$, or $\NA$) about the item to refine the identification of their embedding vector.

\begin{figure*}[!ht]
    \centering
    \includegraphics[width=0.85\linewidth]{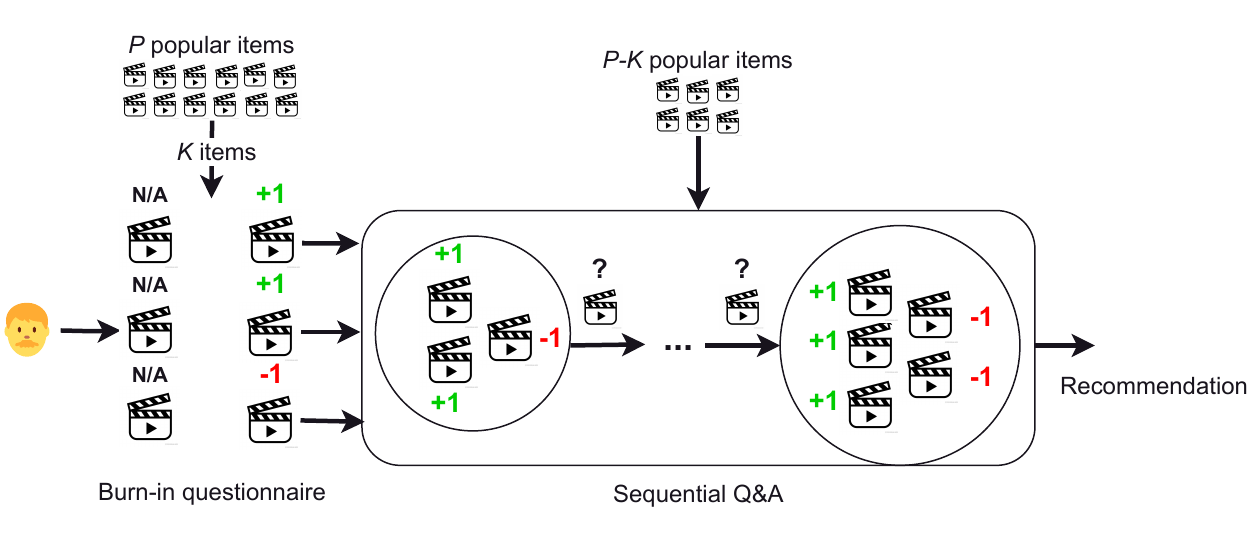}
    \caption{When a new user arrives, we use a determinantal point process to query a diverse set of items from the $P$ \textit{popular} items list to construct the burn-in questionnaire. Subsequently, we use a sequential question-answering procedure to refine the embedding region of the user's preferences. The recommendation is made using the Chebyshev center of the embedding region, which is consistent with the user's stated preferences.} 
    \label{fig:flow}
\end{figure*}

\subsection{A Model of the User's Behavior}

To create an interactive mechanism connecting the user and the recommender system, we need to build a behavior model for each user.  Without any loss of generality, we assume that each user can be represented by an embedding vector $u_0 \in \mbb H$. The true location of the vector $u_0$ remains elusive to the recommendation system, but it is \textit{in}variant throughout the procedure of preference elicitation. The user can rate items positively ($+1$) or negatively ($-1$) only if they have prior experience with the item. For instance, in the context of Netflix, where items are movies, this translates to the user having watched certain movies. A key aspect of modeling user behavior revolves around the probability of experience. We make the following assumption on the probability that a user has experienced an item:

\begin{assumption}[Experience probability] \label{a:exp-prob} The probability that an user $u_0$ has experienced an item $v_i$ is given by
    \be \label{eq:prob}
    p_{0i} \Let w_i \times \mathrm{sigmoid}\big( \frac{1}{c_{0i}} - \frac{\kappa_0}{\sqrt{d} - c_{0i}}  \big), 
    \ee
    where $c_{0i} = \| u_0 - v_i \|_2$ is the distance between the true user's and the item's embedding. Moreover, whether the user has prior experience with item $i$ is independent of whether the user has prior experience with any other item $j \neq i$.
\end{assumption}

Assumption~\ref{a:exp-prob} proposes that the probability that a user has experienced an item depends on two main factors: the item popularity $w_i$ and the distance between the user's embedding and the item's embedding $c_{0i}$. According to Assumption~\ref{a:exp-prob}, if two items, $i$ and $j$, have equal distances from the user's embedding, the item with higher popularity (indicated by a larger weight) will have a higher experience probability. The relationship between the experience probability and the distance between embeddings is complex. Notably, the Euclidean distance from $u_0$ to $v_i$ cannot exceed $\sqrt{d}$, where $d$ represents the dimension of the embedding hypercube $\mathbb{H}$. Additionally, the parameter $\kappa_0$ acts as a tolerance parameter known only to the user. When $c_{0i}$ approaches 0, the sigmoid term tends to 1, and when $c_{0i}$ approaches $\sqrt{d}$, the sigmoid term tends to 0. To study the impact of parameter $\kappa$ on the probability that a user has experienced an item, we conduct an experiment in the supplementary.

Moreover, preference consistency is a fundamental question in the preference elicitation literature. Inconsistency in preference elicitation refers to situations when users provide conflicting or contradictory ratings or feedback for the same or similar items. For instance, if a user $u_0$ prefers item $v_i$ to $v_j$ but rates ($-1$) and ($+1$) for those two items, respectively. Thus, to ensure the consistency of our proposed method, we make the following assumption according to the preference consistency between a user and two items:

\begin{assumption}[Preference consistency] \label{a:consistency}
    Suppose that the true user's embedding is $u_0$. Given any two items $v_i$ and $v_j$ such that $\| u_0 - v_i \|_2 \le \| u_0 - v_j \|_2$:
    \begin{enumerate}[label=(\roman*), leftmargin = 5mm]
        \item \label{a:consistency:+} If the user rates $v_j$ positively ($+1$), then the user can only rate $v_i$ either positively ($+1$) or with $\NA$.
        \item \label{a:consistency:-} If the user rates $v_i$ negatively ($-1$), then the user can only rate $v_j$ either negatively ($-1$) or with $\NA$.
    \end{enumerate}
\end{assumption}

Assumption~\ref{a:consistency} ensures that the user's preference is consistent with the neighborhood structure of the embedding space. Inconsistency may arise if the user rates $v_i$ negatively and  $v_j$ positively, even though $v_i$ is closer to the user's true embedding than $v_j$. This inconsistency is wholly eliminated under Assumption~\ref{a:consistency}.

\section{Adaptive Q\&A with Personalized Embedding Region Elicitation} \label{sec:solution}
This section presents our proposed solution package comprising two distinct phases: a burn-in questionnaire and a sequential and adaptive Q\&A process. Additionally, we provide a recommendation module based on the Chebyshev center of the region, which is designed specifically for the recommendation task. As there is no prior information about the user's preferences, we implement a burn-in phase using a determinantal point process (DPP) to generate a short, static questionnaire for each new user. The DPP balances two criteria: diversity and popularity.

 The adaptive Q\&A process facilitates the sequential elicitation of user preferences. We assume this phase lasts $T$ rounds; in each round, we select $m$ items to ask for feedback from the user. While the user's true embedding vector $u_0$ is not available to the system, we can characterize the plausible values of the user's embeddings from the user's feedback. By utilizing a set of positively rated items and negatively rated items, we can form pairwise preferences and effectively refine the plausible embedding region. Therefore, this iterative elicitation allows us to increase the accuracy of the preference approximation.

\textbf{Set of plausible embeddings.} We suppose the user has indicated a set of positively-rated items $\mc L^+$ and a set of negatively-rated items $\mc L^-$. The set of induced preferences $\mbb P$ is formed by picking any $i \in \mc L^+$ and any $j \in \mc L^-$, and appending the preference $v_i \succsim v_j$ to $\mathbb{P}$.\footnote{For each user, we use $\succsim$ to denote a preference relation among items, that is, $\succsim$ denotes a complete and transitive order.} As a consequence, we have the following preference set
    \be \label{eq:P}
    \mathbb{P} = \{ v_i \succsim v_j: \forall v_i \in \mc L^+,~\forall v_j \in \mc L^-\}.
    \ee
From any preference set $\mbb P$, we can subsequently characterize a region $\mc U_{\mbb P}$ that conforms with the user's preferences. For instance, if we pick any preference relation $v_i \succsim v_j$ in the preference set $\mbb P$, Assumption~\ref{a:consistency} implies that the distance from the user's embedding $u_0$ to $v_i$ should be smaller than the distance to $v_j$. Because we are using Euclidean distance, this, in turn, implies that
\[
    \|v_i - u_0\|_2^2 \le \|v_j - u_0\|_2^2.
\]
By consolidating all preferences in the preference set $\mbb P$, we expect the user's embedding to satisfy all of the below equations simultaneously. Thus, we have
\[
     \|v_i - u_0\|_2^2 \le \|v_j - u_0\|_2^2 \quad \forall v_i \succsim v_j \in \mbb P.
\]
By expanding the norms, $u_0$ should satisfy
\[
2 u_{0}^\top (v_j - v_i)  \le \|v_j\|_2^2 - \|v_i\|_2^2 \quad \forall v_i \succsim v_j \in \mbb P.
\]
We denote $\mc U_{\mbb P}$ as a set that contains all possible values of the embeddings that are consistent with the preference set $\mbb P$, then we have 
\[
\begin{array}{cl}
&\mc U_{\mbb P} = \{ u \in \mbb H: 2 u^\top (v_j - v_i) \le \|v_j\|_2^2 - \|v_i\|_2^2~~\\ & \hspace{5cm} \forall v_i \succsim v_j \in \mbb P
\},
\end{array}
\]
and under Assumption~\ref{a:consistency}, we have $u_0 \in \mc U_{\mbb P}$.

\textbf{Locating the Chebyshev center.} Now, we determine the Chebyshev center of the set $\mc U_{\mbb P}$. The Chebyshev center refers to the center of a ball with the maximum radius and is enclosed within a bounded set with a non-empty interior. Consequently, the Chebyshev center of the confidence set $\mc U_{\mbb P}$ is considered a safe point estimate for the true embedding $u_0$. Moreover, by identifying the Chebyshev center, we can find the most aggressive cut to the set $\mc U_{\mbb P}$, thereby expediting the refinement of the plausible embedding region.

The Chebysev center $u_c\opt$ of the set $\mc U_{\mbb P}$ and the radius $r\opt$ can be computed by solving the following problem
\[
\Max{u_c \in \mbb H,~r \in \R_+}~\left\{ r ~:~ \|u - u_c \|_2^2 \le r^2 ~~ \forall u \in \mc U_{\mbb P}\right\}.
\]
For our problem, the Chebyshev center can be obtained by solving a linear program, resulting from the following theorem.
\begin{theorem}[Chebyshev center] \label{thm:chebyshev}
Suppose that $\mc U_{\mbb P}$ has a non-empty interior. The Chebyshev center $u_c\opt$ of the set $\mc U_{\mbb P}$ can be found by solving the following problem
\[
    \begin{array}{cl}
         \max & r \\
         \st & 2 u_{c}^\top (v_j - v_i) + 2 r \| v_j - v_i \|_2 \le \|v_j\|_2^2 - \|v_i\|_2^2 \\
         & \hspace{4cm} \forall v_i \succsim v_j \in \mbb P \\
             & u_c \in \mbb H,~r \in \R_+.
    \end{array}
\]
\end{theorem}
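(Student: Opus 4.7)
The plan is to derive the stated linear program by applying the standard inner-ball characterization of the Chebyshev center for a polyhedron. First I would unpack the defining inequalities of $\mc U_{\mbb P}$: each preference relation $v_i \succsim v_j \in \mbb P$ contributes the linear half-space
\[
H_{ij} = \{ u \in \R^d : 2u^\top(v_j - v_i) \le \|v_j\|_2^2 - \|v_i\|_2^2\},
\]
so $\mc U_{\mbb P} = \mbb H \cap \bigcap_{v_i \succsim v_j \in \mbb P} H_{ij}$ is a polyhedron. The non-empty interior assumption guarantees that the Chebyshev problem is well-posed and that the optimal radius is strictly positive.

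The key step is the classical observation that a closed Euclidean ball $B(u_c, r) = \{ u : \|u - u_c\|_2 \le r\}$ lies inside a half-space $\{u : a^\top u \le b\}$ if and only if $a^\top u_c + r\|a\|_2 \le b$. This is immediate from Cauchy--Schwarz: the support function of $B(u_c, r)$ is $\max_{u \in B(u_c, r)} a^\top u = a^\top u_c + r\|a\|_2$, attained at $u = u_c + r\, a/\|a\|_2$. Applying this reduction to each $H_{ij}$ with $a = 2(v_j - v_i)$ and $b = \|v_j\|_2^2 - \|v_i\|_2^2$ gives $\|a\|_2 = 2\|v_j - v_i\|_2$, producing exactly the inequality stated in Theorem~\ref{thm:chebyshev}. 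Since all these constraints are affine in the joint variable $(u_c, r)$, maximizing $r$ over them yields a linear program whose optimum $(u_c^\star, r^\star)$ realizes the largest inscribed ball, which is by definition the Chebyshev center and in-radius.

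The main obstacle I anticipate lies in handling the hypercube constraint $\mbb H$ rigorously. Taken literally, the theorem only imposes $u_c \in \mbb H$, but ball containment inside $\mc U_{\mbb P}$ also requires the ball to fit inside every face of $\mbb H$. To be fully precise, I would apply the same support-function argument to each face constraint $e_l^\top u \le 1$ and $-e_l^\top u \le 0$ for $l = 1, \ldots, d$, producing the additional linear inequalities $r \le 1 - (u_c)_l$ and $r \le (u_c)_l$, which together imply $u_c \in \mbb H$ and preserve the LP structure. I would either absorb these face constraints into the statement explicitly or argue they are the intended reading of ``$u_c \in \mbb H$'' in the theorem. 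Apart from this bookkeeping, the proof is a direct application of the support-function identity and requires no nontrivial calculation.
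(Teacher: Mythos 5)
Your proof is correct and takes essentially the same route as the paper's: the paper rewrites ball containment as the semi-infinite constraint $2(u_c+\delta)^\top(v_j-v_i) \le \|v_j\|_2^2 - \|v_i\|_2^2$ for all $\delta \in \mc B_r$ and evaluates $\sup_{\|\delta\|_2 \le r} \delta^\top (v_j - v_i) = r\|v_j - v_i\|_2$ by self-duality of the Euclidean norm, which is exactly your support-function identity. Your remark about the hypercube faces is well taken --- the paper's own proof likewise robustifies only the preference constraints and keeps $u_c \in \mbb H$ as a condition on the center alone, so the LP computes the largest ball inside $\bigcap H_{ij}$ with center in $\mbb H$ rather than the largest ball inside $\mc U_{\mbb P}$ itself --- but this is a discrepancy in the statement, not a gap in your argument.
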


The proof of Theorem 1 follows from a duality argument and is relegated to the supplementary material.

\begin{figure}[!ht]
    \centering
    \includegraphics[width=0.8\linewidth]{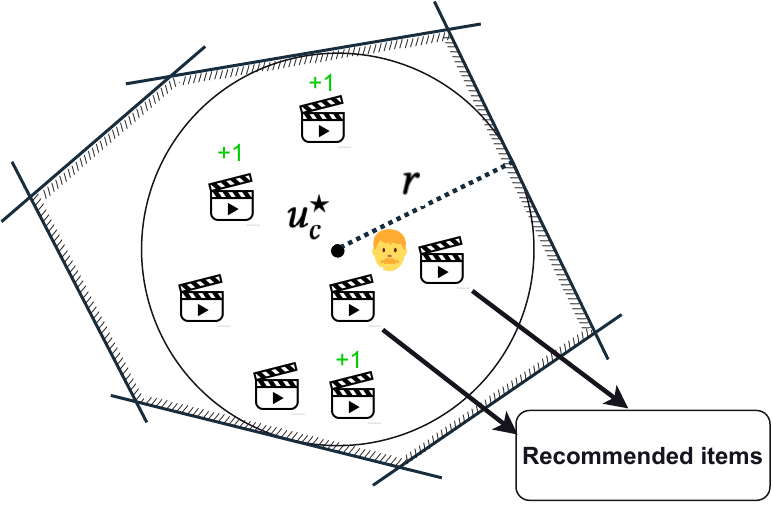}
    \caption{The hyperplanes $2 u_{c}^\top (v_i - v_j) = \|v_i\|_2^2 - \|v_j\|_2^2$ for $i \succsim j \in \mbb P$ are drawn as black lines, and they define the boundary of the set $\mc U_{\mbb P}$. The ball centered at the Chebyshev center $u_c\opt$ with radius $r$ is the largest inscribed Euclidean ball of $\mc U_{\mbb P}$. Our model recommends items based on the proximity to the Chebyshev center: here, two movies nearest to $u_c\opt$ are highlighted.}
    \label{fig:chebyshev}
\end{figure}

\textbf{Next item to query.} At time $t+1$, we have already obtained user feedback on the list $\mc{L}_t$ of popular items, represented by the tuples $\mc{L}^+_t$, $\mc{L}^-_t$, and $\mc{L}^\NA_t$. The remaining popular items are denoted as $\mc V_t = \{v_i\}_{i=1, \ldots, P} \backslash \mc L_t$. Then, for the next $T$ rounds, we select the next item $v_i$ from $\mc{V}_t$ and obtain the user's rating. The goal is twofold: if the user rates the newly presented item positively ($+1$), we can leverage this positive experience along with the list of negatively-rated items $\mc{L}^-_t$ to generate new pairwise preferences. Conversely, if the user rates the item negatively ($-1$), this information can be combined with the set $\mc{L}^+_t$ to create additional preferences. However, the feedback is uninformative if the user rates the new item as $\NA$. Preferences involving two items $v_i$ and $v_j$ can be represented by a hyperplane equation 
\[
    2 u_{c}^\top (v_i - v_j) = \|v_i\|_2^2 - \|v_j\|_2^2.
\]
A possible goal is to find the next item $v_i \in \mc{V}_t$ that minimizes the total weighted distance from the incumbent Chebyshev center to all constructed hyperplanes. To find the next item to ask, we need to consider the probability that the user has prior experience with the next item. The higher the probability, the more inclined the system should choose this item to obtain informative feedback (either a positive or a negative rating). Because the recommendation system does not know $u_0$ and $\kappa_0$, it does not know the true value of the probability that the user $u_0$ has prior experience with item $i$. Instead, the system will employ the following surrogate
\begin{align} \label{eq:hat_p}
\widehat{p}_i &= \widehat{\mathrm{Prob}}( \text{user has experienced } v_i) \\
&= w_i \times \mathrm{sigmoid}\big(  \frac{1}{\hat{c}_{0i}} - \frac{1}{\sqrt{d} - \hat{c}_{0i}} \big), 
\end{align}
where $\hat{c}_{0i} = \| u_c\opt - v_i \|_2$.
We can observe that this surrogate $\wh p_i$ does not depend on $\kappa_0$. Moreover, this surrogate probability is computed based on the distance from the item embedding to the incumbent Chebyshev center $u_c\opt$, but not to the true value of the user's embedding $u_0$. Our method is also robust to the misspecification of the functional form. The numerical section also shows that $\hat p_i$ can be calculated using the $\mathrm{tanh}$ function instead of the $\mathrm{sigmoid}$ function. 

Conditioned that the user has prior experience with item $v_i$, there now exist three situations:
\begin{itemize}[leftmargin=5mm]
\item Case 1: if the item $v_i$ satisfies 
\[
\| v_i - u_c\opt \|_2 \le \max_{v_j \in \mc L^+_t} \| v_j - u_c\opt \|_2,
\]
then by Assumption~\ref{a:consistency}\ref{a:consistency:+} it is likely that the user will rate item $v_i$ positively ($+1$) or $\NA$. If we exercise optimism, we expect the user to rate positively ($+1$). In this optimistic case, this positive rating from the user will lead to subsequently $| \mc L^-_t|$ new preferences of the form $v_i \succsim v_j$ for all $v_j \in \mc L^-_t$. Each pairwise preference is represented by a linear cut $2 (v_j - v_i)^\top u \le  \|v_j\|_2^2 - \|v_i\|_2^2$.
The degree to which the above cut can effectively reduce the size of the set $\mc U_{\mbb P}$ is quantified by the distance from the Chebyshev center $u_c\opt$ to the hyperplane $2 u ^\top (v_j - v_i) =  \|v_j\|_2^2 - \|v_i\|_2^2$. An elementary calculation shows that this distance has an analytical form
\[
\frac{|2 (v_j - v_i)^\top u_c\opt + \|v_i\|^2_2 - \|v_j\|^2_2|}{\|2 (v_j - v_i)\|_2}.
\]
As a consequence, if we decide to sum up these distances, then the total distance from the Chebyshev center $u_c\opt$ to all hyperplanes generated by the positive feedback on item $v_i \in \mc V_t$ is
\[
    \sum_{v_j \in \mc L^-_t}~\frac{|2 (v_j - v_i)^\top u_c\opt + \|v_i\|^2_2 - \|v_j\|^2_2|}{\|2 (v_j - v_i)\|_2} \Let q_i^+.
\]
\item Case 2: if the item $v_i$ satisfies 
\[
\| v_i - u_c\opt \|_2 \ge \min_{v_j \in \mc L^-_t} \| v_j - u_c\opt \|_2,\]
then by Assumption~\ref{a:consistency}\ref{a:consistency:-}, it is likely that the user will rate $v_i$ negatively ($-1$) or $\NA$. A parallel argument can quantify the total distance in this case:
\[
\sum_{v_j \in \mc L^+_t}  \frac{|2 (v_i - v_j)^\top u_c\opt + \|v_j\|^2_2 - \|v_i\|^2_2|}{\|2 (v_i - v_j)\|_2} \Let q_i^-.
\]
\item Case 3: if item $v_i$ does not satisfy the above conditions, then we have high uncertainty about the user's response for $v_i$. Nevertheless, if we opt for optimism, we can use the minimum of the two distances:
    $\min\left\{q_i^+,q_i^- \right\} \Let q_i^{\NA}$.
\end{itemize}
Our goal is to choose the next items that maximize the probability of user experience for each chosen item while minimizing the distance from the resulting cut to the center in all three cases mentioned. Consequently, we can determine the next item to query by finding the equation below:
    \begin{align*}
    &\min_{v_i \in \mc V_t}~ (1-\hat p_i) \Big[ q_i^+ \mathbb{I}^+(v_i) + q_i^- \mathbb{I}^-(v_i) +  \\
    &\hspace{3cm} q_i^{\NA} (1 - \mathbb{I}^+(v_i)) (1 - \mathbb{I}^-(v_i)) \Big],
    \end{align*}
    where $\mathbb{I}^+$ is the indicator function for Case 1 above:
    \[
    \mathbb{I}^+(v_i) = \begin{cases}
    1 & \text{if } \| v_i - u_c\opt \|_2 \le \max_{v_j \in \mc L^+_t} \| v_j - u_c\opt \|_2, \\
    0 & \text{otherwise,}
    \end{cases}
    \]
    and $\mathbb{I}^-$ is the indicator for Case 2:
    \[
    \mathbb{I}^-(v_i) = \begin{cases}
    1 & \text{if } \| v_i - u_c\opt \|_2 \ge \min_{v_j \in \mc L^-_t} \| v_j - u_c\opt \|_2, \\
    0 & \text{otherwise.}
    \end{cases}
    \]    
Notice that these indicator functions depend on the current Chebyshev center $u_c\opt$ as well as the current set of positively-rated items $\mc L_t^+$ and negatively-rated items $\mc L_t^-$; however, these dependencies are omitted to avoid clutter. 

To enhance understanding of our process for identifying $\mc U_{\mbb P}$, we create and visualize a toy example comprising a single user and five items in the supplementary.

\textbf{Aggregating Chebyshev centers using a reweighting scheme.}
Suppose that in the ``burn-in'' questionnaire, we have asked $K$ items. In sequential Q\&A, suppose at round $t$, we obtain a Chebyshev center $u_c^{t}$ by solving the optimization problem in Theorem~\ref{thm:chebyshev}. Then, the aggregated center after the adaptive Q\&A process (after $T$ rounds) can be computed using a reweighting scheme:
\be \label{eq:agg_cheb}
    \bar u_c\opt = \sum_{t=0}^T \frac{K + t \times m}{K \times (T + 1) + T \times (T + 1) \times m / 2} u_{c}^t.
\ee
The denominator is a normalizing constant so that the weights sum up to one.

\textbf{Item recommendation.} At any time, our system keeps track of three sets of items: $\mc L^+$, $\mc L^-$, and $\mc L^\NA$. We generate all valid pairwise preferences by coupling items from the $\mc L^+$ and $\mc L^-$ sets. Each preference pair delineates a distinct cut in the embedding region, effectively narrowing down the area denoted by the set $\mc U_{\mbb P}$ in the embedding space containing the new user embedding. To generate item recommendations, we calculate the Euclidean distance from all unqueried items to the aggregated center in~\eqref{eq:agg_cheb} and recommend the top $k$ items nearest to this center.

\section{Numerical Experiments} \label{sec:exp}
We conduct extensive experiments to study the efficacy of our proposed approach. We conduct two experiments to address the following research questions:
\begin{enumerate}[leftmargin=10mm,start=1,label={\bfseries RQ\arabic*:}]
    \item Can our algorithm accurately approximate the embedding region that contains the new user's embeddings $u_0$ with minimal information?
    \item How does our item-selection mechanism proposed in Section~\ref{sec:solution} compare to baselines?
    \item How does our proposed method generalize to different types of embedding techniques and functional forms in estimating experience probability?
\end{enumerate}

\subsection{Experiment Settings}
\textbf{Datasets and User-Item Embedding.} In our experiments, we utilize two datasets: Gowalla and Amazon-Books. Both Amazon-Books and Gowalla datasets are standard benchmarks in the recommendation system literature. Many recent published papers use these two datasets, including~\citet{ref:silva2023user, ref:gong2023full}. To process these datasets, we adhere to the pipeline in the LightGCN~\citep{ref:he2020lightgcn} and biVAE~\citep{ref:truong2021bilateral}. The embedding for users and items can be obtained using any collaborative filtering method, e.g., LightGCN~\citep{ref:he2020lightgcn} or biVAE~\citep{ref:truong2021bilateral}. Because of the lack of real data for the entire preference elicitation process in the cold-start recommendation problem, we consider the embedding produced by the collaborator filtering methods for new users (detailed in the next section) as the ``true'' embedding and conceal them from our algorithm. Our settings are still appropriate because we ensure that the recommender system does \textit{not} have access to the ground truth embedding of the new user, and the system \textit{only} has access to the user's behavior on the questionnaires. We design an experiment to see how well our algorithm approximates this ``true'' embedding for new users after a fixed number of questions. Note that the comparison of the collaborative filtering frameworks is beyond the scope of this paper. Additional information about these datasets and embedding generation can be found in the supplementary.

\textbf{New user's characteristics generation. }
A generated user possesses four attributes: a user embedding denoted as $u_0$, an $N$-dimensional binary vector indicating whether the user has experienced each item, a list of liked items, and a list of disliked items. Generating a new user begins by obtaining the user embedding outlined in the above section. Then, based on the available data, we calculate the maximum likelihood estimate $\hat{\kappa}_{\mathrm{MLE}}$, as detailed in supplementary. This estimation allows us to determine the user experience probability $p_i$ for each item as assumed in~\eqref{eq:prob}. To ascertain whether the new user has experienced a particular item, we generate a binary variable $z \in \{0,1\}^N$ for each item using a Bernoulli distribution. If $z_i$ equals 0, the user has not experienced the item $i$ ($\NA$). Conversely, if $z_i$ equals 1, the user has previously experienced the item $i$. Additionally, we retrieve the top $k$ items closest to $u_0$ and append them to the list of liked items. Let $N_e$ be the number of items experienced by the user. If the user has only experienced $N_e$ items, the remaining $N_e-k$ items are considered to be disliked items for this user.

\textbf{Setup.} We employ DPP to curate a diverse set of items ($K=50$ items) for inquiries directed at newly registered users. Supporting evidence demonstrating the superior efficiency of DPP compared to competing methods such as greedy and random generation is included in the supplementary material. In the sequential Q\&A phase, we present the user with $T = 5$ questionnaires; each contains $m = 10$ items.
 
\textbf{Baselines.} We compare our proposed method PERE (Personalized Embedding Region Elicitation) against six baselines: DPP, Conditional DPP (c-DPP), RMV~\citep{ref:fonarev2016efficient}, DPE~\citep{ref:parapar2021diverse}, PEO~\citep{ref:sepliarskaia2018preference} and DRE~\cite{ref:kweon2020deep}. c-DPP is a modified version of DPP that selects $K$ items from the remaining un-queried items. We note that for a fair comparison, we must compare our proposed method against other cold-start recommendation methods with preference elicitation. Our chosen baselines are the most recent methods in that line of research work.

\textbf{Performance Metric.} 
To assess different approaches, we employ several metrics. These include NDCG@$k$ (Normalized Discounted Cumulative Gain), which evaluates relevance and ranking simultaneously; MAP (Mean Average Precision), providing an aggregate measure of precision; and MRR (Mean Reciprocal Rank), indicating promptness in presenting relevant items. These metrics collectively estimate the recommendation system's accuracy, relevance, ranking quality, and user satisfaction.

\begin{table}[htb]
\centering
\caption{Benchmark of performance metrics on Amazon-Books (user and item embeddings produced by biVAE). Larger values are better. The best performance for any fixed number of questions is highlighted in bold. Sequential Setting contains $50+10+10+10+10+10$ items.}
\label{tab:amazon_books_bivae}
\footnotesize	

\pgfplotstabletypeset[
    col sep=comma,
    string type,
    every head row/.style={before row=\toprule,after row=\midrule},
    every row no 3/.style={after row=\midrule},
    every last row/.style={after row=\bottomrule},
    columns/data/.style={column name=Method, column type={l}},
    columns/data/.style={column name=HR@1, column type={l}},
    columns/data/.style={column name=AUC@10, column type={l}},
    columns/data/.style={column name=NDCG@10, column type={l}},
    columns/data/.style={column name=NDCG@30, column type={l}},
    columns/data/.style={column name=MAP, column type={l}},
    columns/data/.style={column name=MRR, column type={l}},
]{Phase2_amazon_books_bivae.csv}
\end{table}

\begin{table}[htb]
\centering
\caption{Benchmark of performance metrics on Amazon-Books (user and item embeddings produced by LightGCN). Larger values are better. The best performance for any fixed number of questions is highlighted in bold. Sequential Setting contains $50+10+10+10+10+10$ items.}
\label{tab:amazon_books_lightgcn}
\footnotesize	
\pgfplotstabletypeset[
    col sep=comma,
    string type,
    every head row/.style={before row=\toprule,after row=\midrule},
    every row no 3/.style={after row=\midrule},
    every last row/.style={after row=\bottomrule},
    columns/data/.style={column name=Method, column type={l}},
    columns/data/.style={column name=HR@1, column type={l}},
    columns/data/.style={column name=AUC@10, column type={l}},
    columns/data/.style={column name=NDCG@10, column type={l}},
    columns/data/.style={column name=NDCG@30, column type={l}},
    columns/data/.style={column name=MAP, column type={l}},
    columns/data/.style={column name=MRR, column type={l}},
]{Phase2_amazon_books_lightgcn.csv}
\end{table}

\begin{table}[htb]
\centering
\caption{Benchmark of performance metrics on Gowalla (user and item embeddings produced by LightGCN). Larger values are better. The best performance for any fixed number of questions is highlighted in bold. Sequential Setting contains $50+10+10+10+10+10$ items.}
\label{tab:gowalla_lightgcn}
\footnotesize
\pgfplotstabletypeset[
    col sep=comma,
    string type,
    every head row/.style={before row=\toprule,after row=\midrule},
    every row no 3/.style={after row=\midrule},
    every last row/.style={after row=\bottomrule},
    columns/data/.style={column name=Method, column type={l}},
    columns/data/.style={column name=HR@1, column type={l}},
    columns/data/.style={column name=AUC@10, column type={l}},
    columns/data/.style={column name=NDCG@10, column type={l}},
    columns/data/.style={column name=NDCG@30, column type={l}},
    columns/data/.style={column name=MAP, column type={l}},
    columns/data/.style={column name=MRR, column type={l}},
]{Phase2_gowalla_lightgcn.csv}
\end{table}

\begin{table*}[htb]
\centering
\caption{Comparing sigmoid and tanh in estimating equation~\eqref{eq:hat_p}: our method is robust to  the functional misspecification.}
\label{tab:robustness}
\footnotesize	
\pgfplotstabletypeset[
    col sep=comma,
    string type,
    every head row/.style={before row=\toprule,after row=\midrule},
    every row no 1/.style={after row=\midrule},
    every row no 3/.style={after row=\midrule},
    every last row/.style={after row=\bottomrule},
    columns/data/.style={column name=Method, column type={l}},
    columns/data/.style={column name=HR@1, column type={l}},
    columns/data/.style={column name=AUC@10, column type={l}},
    columns/data/.style={column name=NDCG@10, column type={l}},
    columns/data/.style={column name=NDCG@30, column type={l}},
    columns/data/.style={column name=MAP, column type={l}},
    columns/data/.style={column name=MRR, column type={l}},
]{robust_tanh_sigmoid.csv}
\end{table*}

\subsection{Numerical Results and Discussion}

The numerical results on different datasets and embeddings are summarised into Tables~\ref{tab:amazon_books_bivae},~\ref{tab:amazon_books_lightgcn}, and~\ref{tab:gowalla_lightgcn}. Due to space limitations, we report experimental results with more performance metrics in the supplementary.

\textbf{Recommendation quality.}  The results indicate that our method is the most effective method for constructing a personalized series of follow-up questions for new users. Building upon the success of DPP, the best-performing method in the ``burn-in'' phase, PERE exhibits the most significant improvement in quality after 50 items have been asked. This is significant due to the small experience probability of items, as defined in~\eqref{eq:prob}. Despite the limited information users provide, our framework successfully enhances the quality of recommendations based on this minimal input. Addressing any potential question about this enhancement stemming solely from the sequential nature of our framework, we also conduct comparisons with other sequential methodologies, such as bandit (DPE), conditional DPP, and active learning (PEO). The results conclusively demonstrate that even when evaluated among sequential methods, PERE remains the top performer. 

\textbf{Generalizability.} Comparing Tables~\ref{tab:amazon_books_bivae} and~\ref{tab:amazon_books_lightgcn}, we observe that our method PERE efficiently generalizes to multiple types of embedding generation techniques, in our case, LightGCN (trained with implicit user response) and biVAE (trained with explicit user response). 

\textbf{Robustness with functional misspecification.} To evaluate the robustness against misspecification of the functional form in estimating experience probability, we devise an experiment utilizing both the sigmoid and tanh functions in equation~\eqref{eq:hat_p}. Table~\ref{tab:robustness} highlights that our method consistently upholds recommendation quality despite replacing the sigmoid with the tanh function.

\textbf{Real user experiments.} We design additional offline experiments using real user data accumulated in datasets such as MovieLens 10M, MovieLens 20M, and Amazon Books. We focused on the Amazon Books, MovieLens 10M, and MovieLens 20M datasets because they are widely used in the recommender systems literature and provide a diverse set of user-item interactions across different domains. These datasets are comparable in scale and complexity to those used in the related work section, making them suitable for evaluating the generalizability of our framework.

The items users prefer are based on their actual ratings from the datasets rather than being generated from the embeddings. We followed the data processing approach used in the Deep rating elicitation~\citep{ref:kweon2020deep}, which is also one of our baselines. Specifically, we filtered out users who rated over $40$ items and converted implicit ratings $(1-5)$ to explicit ratings ($0$ and $1$) as follows: ratings of $4$ and $5$ are considered as liked items, while ratings of $0$, $1$, and $3$ are considered as disliked items. Table~\ref{tab:real-user} demonstrates that our method outperforms baselines in all three datasets.

\begin{table*}[htb]
    \centering
    \caption{Real user experiments on three datasets.}
    \begin{tabular}{lccccccc}
    \hline
        Datasets & Methods & NDCG@10 & MAP & MRR \\ \hline
        MoviesLens-10M  &  c-DPP  &	0.802 &	0.591 &	0.772\\ 
             &  DPE & 0.674 & 0.439 & 0.607\\
             &  RMV & 0.492 & 0.281 & 0.437 \\ 
             &  PEO & 0.667 & 0.429 & 0.69 \\ 
             &  DRE & 0.337 & 0.164 & 0.271 \\
             & PERE & \textbf{0.812} & \textbf{0.603} & \textbf{0.784} \\\hline
        MoviesLens-20M  &  c-DPP  &	0.628 & 0.499 & 0.689\\ 
             &  DPE & 0.634 & 0.428 & 0.578\\
             &  RMV & 0.144 & 0.088 & 0.133 \\ 
             &  PEO & 0.639 & 0.394 & 0.592 \\ 
             &  DRE & 0.435 & 0.227 & 0.364 \\
             & PERE & \textbf{0.734} & \textbf{0.505} & \textbf{0.696} \\\hline
        Amazon-Books  &  c-DPP  &	0.127 &	0.101 & 0.108\\ 
             &  DPE & 0.082 & 0.078 & 0.084\\
             &  RMV & 0.044 & 0.041 & 0.047 \\ 
             &  PEO & 0.099 & 0.084 & 0.105 \\ 
             &  DRE & 0.029 & 0.025 & 0.027 \\
             & PERE & \textbf{0.132} & \textbf{0.106} & \textbf{0.125} \\\hline
    \end{tabular}
    \label{tab:real-user}
\end{table*}

\textbf{Inconsistent preference.} We conduct two additional experiments to evaluate our method's performance under inconsistent user preferences: 
\begin{itemize}
    \item \textbf{Experiment 1:} We introduce a probability $\tau$ that a user's response to an experienced item will be flipped. When $\tau = 0$, there is no inconsistency, and when $\tau = 1$, responses are always inconsistent. We plot the performance gain in NDCG@50 against the number of displayed items for different values of $\tau$. Figure~\ref{fig:inconsist-ndcg} shows that as $\tau$ increases, the performance gain decreases but remains positive, demonstrating that our method still provides benefits despite inconsistencies in user responses.
    \item \textbf{Experiment 2:} We compare our method against DPE and RMV in the presence of inconsistency. Table~\ref{tab:inconsist-dpe} shows that our method maintains its advantage over the baselines even with inconsistent user preferences.
\end{itemize}

\begin{figure}[!htb]
    \centering
    \includegraphics[width=0.9\linewidth]{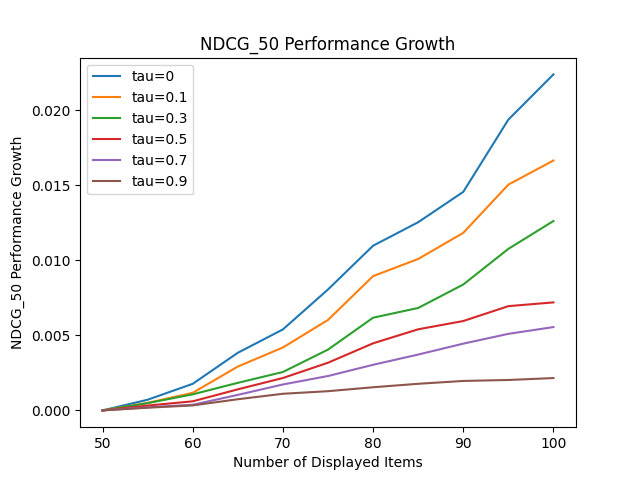}
    \caption{As the value of $\kappa_0$ increases, NDCG@50 increases under inconsistent preference setting.}
    \label{fig:inconsist-ndcg}
\end{figure}

\begin{table*}[htb]
    \centering
    \caption{Comparison against DPE and RMV under inconsistent preference setting.}
    \begin{tabular}{lccccccc}
    \hline
        Datasets & Methods & HR@5 & NDCG@10 &	MRR \\ \hline
        Amazon-Books ($\tau=0.1$)  &  DPE & 0.305 & 0.285 & 0.293\\ 
             &  RMV & 0.205 & 0.188 & 0.179\\
             &  PERE & \textbf{0.365} & \textbf{0.303} & \textbf{0.329} \\ \hline
        Amazon-Books ($\tau=0.5$)  &  DPE & 0.310 & 0.288	& 0.297\\ 
             &  RMV & 0.205 & 0.188 & 0.179\\
             &  PERE & \textbf{0.360} & \textbf{0.329} & \textbf{0.328} \\ \hline
    \end{tabular}
    \label{tab:inconsist-dpe}
\end{table*}

\textbf{Questionnaire size analysis.}
To be user-friendly, the questionnaire size should be small to avoid stressing the user's cognitive load. We find that the number of items at each round does not significantly affect the quality of the method. What is more interesting to track is the quality improvement over a long history as the \textit{total} number of questions increases. Therefore, we conduct an additional experiment to study the impact of the total number of questions on the performance metrics NDCG@$10$ and MRR. Figure~\ref{fig:vary-K} shows that our method outperforms RMV, DPE, PEO, and DRE in all datasets. Additionally, our method outperforms c-DPP when eliciting $K=100$ items in total.

\begin{figure*}[htb]
    \centering
    \includegraphics[width=0.8\linewidth]{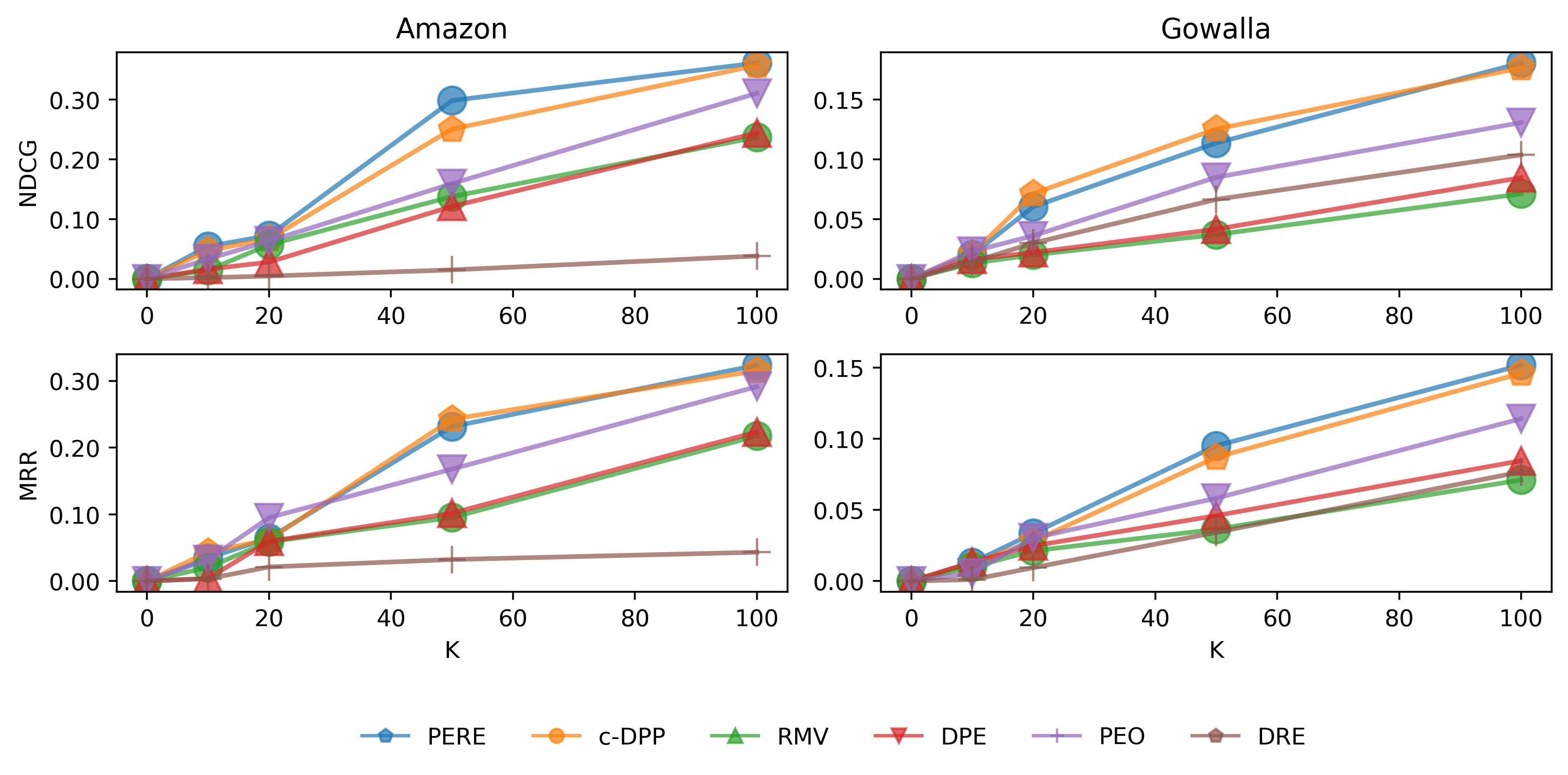}
    \caption{Performance improvements with the dynamic questionnaire size on Amazon-Books and Gowalla datasets.}
    \label{fig:vary-K}
\end{figure*}

\textbf{Runtime comparison.} We report the run-time experiments on our largest dataset, Amazon-Books: The average run time per round is approximately 0.27 seconds. We believe this runtime is reasonable for real-time systems if we optimize the hardware-software for deployment.

\section{Related Works} \label{sec:related}
\textbf{Cold-start recommendation.} The cold-start problem presents a significant challenge within recommender systems. This challenge emerges from the sparsity of information necessary to personalize recommendations for users effectively. In most cases, users and items have limited or no interactions. Several approaches have been proposed to tackle the cold-start problem for recommender systems~\citep{ref:rajapakse2022fast, ref:guo2020survey, ref:camacho2018social}. A possible solution for tackling the cold-start problem is to employ collaborative filtering techniques~\citep{ref:natarajan2020resolving, ref:wei2020fast, ref:anwar2022collaborative}. For instance,~\citet{ref:son2017content} introduced a hybrid approach that combines collaborative filtering with content-based methods to mitigate the cold-start problem. 

Deep learning techniques have been employed to learn representations or embeddings that capture the latent features of users and items to handle the cold-start problem~\citep{ref:tao2022sminet, ref:raziperchikolaei2021shared, ref:chu2023meta, ref:yu2021personalized, ref:zheng2021cold}. Recently, graph-based recommendation techniques have become effective approaches for learning user and item representations~\citep{ref:ying2018graph, ref:salha2021cold}. These methods leverage the user-item interaction graphs to infer user preferences. For example,~\citet{ref:ying2018graph} develops a graph autoencoder framework to learn the node representation. This approach empirically shows competitive performance under real-world scenarios.

\textbf{Rating elicitation.} Rating elicitation plays a crucial role in recommender systems, as it involves gathering explicit user feedback to understand their preferences. Rating elicitation refers to a Q\&A process employed by a system to request new users to rate a set of items. This process aims to infer the users' preferences and enhance the quality of the recommendations. The primary challenge in rating elicitation lies in selecting the seed items that can effectively capture the new users' preferences. 

One of the first approaches to solving rating elicitation is Active Collaborative Filtering (CF). Most Active CF methods ask users to rate the set of items that maximize the Expected Value of Information~\cite{ref:boutilier2002active, ref:harpale2008personalized}, information gain~\cite{ref:canal2019active, ref:rashid2002getting, ref:houlsby2014cold}, influence criterion~\citet{ref:rubens2007influence} or minimize the estimated model Entropy~\cite{ref:jin2012bayesian, ref:houlsby2012collaborative}. However, those methods rely on the current estimated model, which is obtained via a few user's warm-start ratings instead of a completely cold-start user setting.

Notably, region refining methods closely resemble our work. For example,~\citet{ref:iyengar2001evaluating} proposes Q-Eval, a preference elicitation method that iteratively refines a permissible region over the weights of multiple item attributes. Another method~\citet{ref:toubia2004polyhedral} involves selecting questions by adding cuts to narrow down the feasible region defined by a polyhedron. However, these methods consider a lower dimensional space compared to our work, which involves a higher-dimensional embedding space. Additionally, our question selection criteria are more complex, combining diversity and information gain maximization.

Recently, rating elicitation has emerged as a powerful method~\citep{ref:gope2017survey, ref:pu2012evaluating} to tackle the cold-start problem in recommender systems. For instance,~\citet{ref:kalloori2018eliciting} proposed an active learning method for pairwise items and a personalized ranking algorithm to increase user satisfaction.~\citet{ref:parapar2021diverse} employed multi-armed bandits, a well-established exploration-exploitation framework from reinforcement learning, to diversify the preferences elicited by the recommendation model. However, in real-world settings, these approaches rely on a fundamental assumption that users will consistently provide feedback, regardless of whether they have experienced the item or not. Nonetheless, this assumption may not hold true in practice. In this work, we address this problem by proposing a novel behavior model for the user and a preference elicitation process that directly takes the experience probability into consideration.

\section{Conclusion} \label{sec:conclusion}
In this paper, we have addressed the problem of cold-start recommendation by proposing a personalized elicitation scheme consisting of two phases. After a short ``burn-in'' phase, we employ an adaptive preference approach where users are sequentially prompted to rate items that refine their preferences and user representation. Throughout the process, the system represents the user's preferences as a region estimate rather than a single point, capturing the uncertainty in their preferences. The value of information gained from user ratings is quantified by considering the distance from the region center that confidently contains the true embedding value. Recommendations are generated by considering the user's preferences region. We have demonstrated the efficiency of each subproblem in the elicitation scheme and conducted empirical evaluations on prominent datasets to showcase the effectiveness of our proposed method compared to existing rating-elicitation approaches.

\textbf{Acknowledgments.} Viet Anh Nguyen gratefully acknowledges the generous support from the CUHK’s Improvement on Competitiveness in Hiring New Faculties Funding Scheme and the CUHK's Direct Grant Project Number 4055191.


\newpage
\onecolumn

\title{Supplementary Material for Paper: Cold-start Recommendation \\ by Personalized Embedding Region Elicitation}
\maketitle

\appendix
\section{Proof of Theorem 1} 
We here provide the proof of Theorem~\ref{thm:chebyshev} that are omitted in the main text.
\begin{proof}
    The optimization problem to find the Chebyshev center and its radius can be rewritten as
    \[
        \begin{array}{cl}
        \max & r \\
         \st & 2 (u_{c} + \delta)^\top (v_j - v_i) \le  \|v_j\|_2^2 - \|v_i\|_2^2 \\
         & \hspace{3cm} \forall \delta \in \mc B_r,~\forall v_i \succsim v_j \in \mbb P \\
            & u_c \in \mbb H,~r \in \R_+,
        \end{array}
    \]
    where $\mc B_r = \{\delta \in \R^d: \| \delta \|_2 \le r\}$ is a $d$-dimensional Euclidean ball of radius $r$. Pick any preference $v_i \succsim v_j \in \mbb P$, the semi-infinite constraint 
    \[
        2 (u_{c} + \delta)^\top (v_j - v_i) \le \|v_j\|_2^2 - \|v_i\|_2^2 ~\forall \delta \in \mc B_r
    \]
    is equivalent to the robust constraint
    \[
    2 u_{c}^\top (v_j - v_i) + 2 \sup_{\|\delta\|_2 \le r} \delta^\top (v_j - v_i) \le \|v_j\|_2^2 - \|v_i\|_2^2.
    \]
    Because the Euclidean norm is a self-dual norm, we have
    \[
        \sup_{\|\delta\|_2 \le r} \delta^\top (v_j - v_i) = r \|v_j - v_i\|_2.
    \]
    Substituting the above relationship to the optimization problem completes the proof.
\end{proof}

\section{Further Explanations about Settings and Region Elicitation}

In Assumption~\ref{a:exp-prob} , the probability that an user $u_0$ has experienced an item $v_i$ is given by
\[
p_{0i} \Let w_i \times \mathrm{sigmoid}\big( \frac{1}{c_{0i}} - \frac{\kappa_0}{\sqrt{d} - c_{0i}}  \big), 
\]
where $c_{0i} = \| u_0 - v_i \|_2$ is the distance between the true user's and the item's embedding. In Figure~\ref{fig:kappa-plot}, we visualize the dependence of $p_{0i}$ on the parameter $\kappa_0$. For a fixed value of the distance $c_{0i}$, the experience probability $p_{0i}$ decreases  monotonically in $\kappa_0$. 

Next, in a toy 2D example, we visualize the region $\mathcal{U}_{\mathbb{P}}$ in Figure~\ref{fig:app-chebyshev}. Initially, a new user (red star) came into our system, but we are unaware of its true embedding location. After two steps of elicitation, it is evident that the Chebyshev center moves progressively closer to the 'True User' embedding, underscoring the success of our proposed method in predicting user embeddings.

\begin{figure}[!htb]
    \centering
    \includegraphics[width=0.8\linewidth]{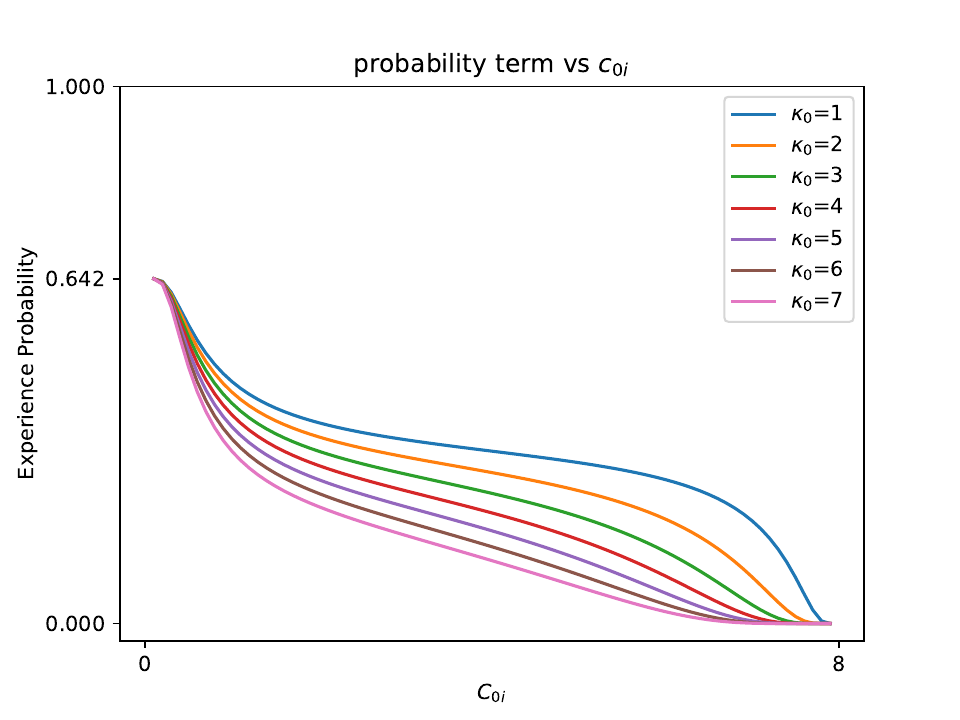}
    \caption{As the value of $\kappa_0$ increases, the probability that the user has prior experience (see Assumption~\ref{a:exp-prob}) with an item is dampened. Plot with $d = 64$ and the maximal value of $c_{0i}$ is $\sqrt{d} = 8$.}
    \label{fig:kappa-plot}
\end{figure}

\begin{figure*}[!ht]
    \centering
    \includegraphics[width=1.0\textwidth]{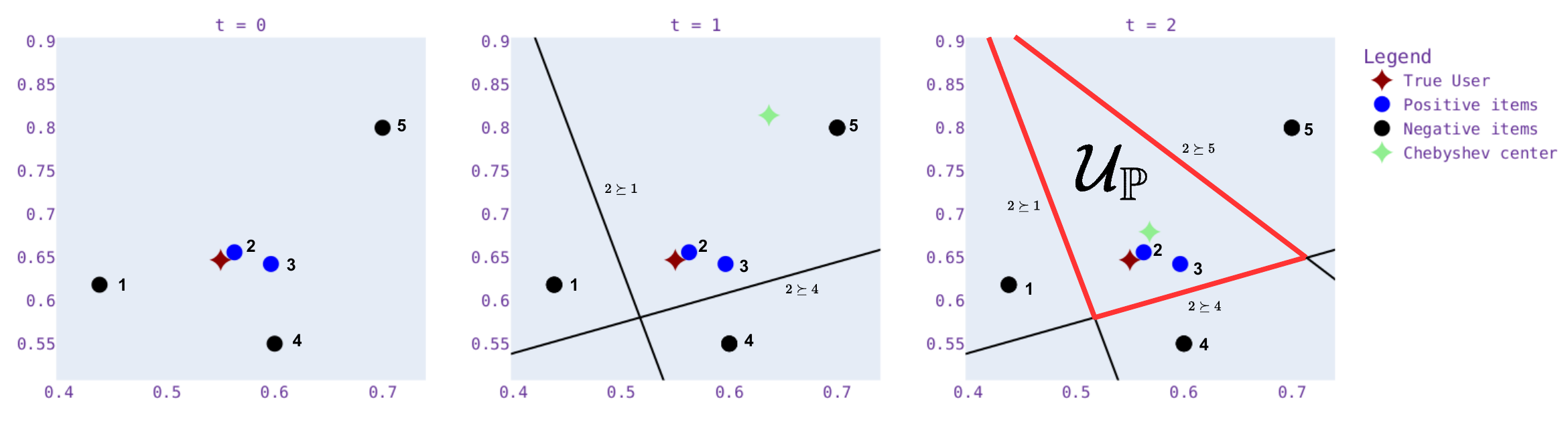}
    \caption{Illustration of our method in 2D toy example: Recall that a cut in the embedding space is created by pairing a positive item with a negative item. At time $t=0$, when no questions have been asked, there are no cuts in the embedding space. Moving to time $t=1$, we asked the user to elicit items 1, 2, and 4, and the user-specified `dislike,' `like,' and `dislike' for each respective item. This introduces two cuts in the space, and the initial Chebyshev center is calculated. Progressing to time $t=3$, we ask the user to elicit item 5 and determine it to be a disliked item. As a result, a final cut is constructed by pairing item 2 with item 5. This process concludes with the finalization of region $\mathcal{U}_{\mathbb{P}}$}
    \label{fig:app-chebyshev}
\end{figure*}

\section{Cold-Starting Query List via Determinantal Point Processes}\label{sec:app-dpp}

The main task of the ``burn-in'' Phase is to create a list, denoted as $\mc L$, comprising $K$ popular items for querying the new user. If a user has no previous experience with an item $v_i$, they will indicate $\NA$ for that particular item. This $\NA$ response is uninformative because item $v_i$ does not lead to any pair of preferences being added to the preference list $\mbb P$ as by the rule of preference construction. Therefore, when constructing the cold-start item list $\mc L$, it is important to consider the probability that a user has prior experience with the items. By Assumption~\ref{a:exp-prob}, this probability is affected by two elements: the popularity of the item and the distance from the true user embedding $u_0$ to the item embedding $v_i$.

Since we do not know the user embedding $u_0$, but we have information about the popularity of the items, we thus leverage this popularity information in the construction of $\mc L$. This line of argument also justifies the construction of the list $\mc L$ that contains only the most popular items from the list of all possible items. To find this list $\mc L$, we can use a simple weighted $K$-medoids method: given a list of $N$ items; the weighted $K$-medoids return a subset of $K$ items to be used as cluster centers. The weighted $K$-medoids problem aims to minimize the total weighted squared Euclidean distance from the item embeddings to the nearest centers.

We present in this section a determinantal point process (DPP) to construct the item list $\mc L$. We aim to find a set of items that can balance the diversity and popularity of items oblivious to the user's true embedding. DPPs are elegant probabilistic models of global, negative correlations, and they admit efficient algorithms for sampling, marginalization, conditioning, and other inference tasks~\citep{ref:kulesza2012determinantal}. DPPs have been applied in various machine learning tasks, including document summarization~\citep{ref:perez2021multi} and image search~\citep{ref:chao2015large}. We rely on the following $L$-ensemble definition of DPP.

\begin{definition}[$L$-ensemble DPP] \label{def:dpp-L}
    Given a positive semidefinite $P$-by-$P$ matrix $L \in \PSD^P$, an $L$-ensemble DPP is a distribution over all $2^P$ index subsets $J \subseteq \{1, \ldots, P\}$ such that
\[\mathrm{Prob}(J) = \det(L_J)/ \det(I + L),\]
where $L_J$ denotes the $|J|$-by-$|J|$ submatrix of $L$ with rows and columns indexed by $J$.
\end{definition}

We design the matrix $L$ that can balance the diversity and popularity of items. We compose $L$ as the sum of a similarity matrix $S$ and a popularity matrix $D$ among items:
\[
    L = S +  D, \quad \text{where} \quad D = \mathrm{diag}(w_i).
\]
The matrix $D$ is diagonal, and its diagonal elements capture the popularity of the items. A possible choice for the similarity matrix $S$ is $S=V^\top V \in \PSD^P$ where $V$ is the embedding matrix of the popular items.  Because both $S$ and $D$ are positive semidefinite, the ensemble matrix $L$ is also positive semidefinite.

We then find the combination of top-$K$ items that fit with the construction of the cold-start querying list by solving the following problem
    \be \label{eq:det}
        \max \left\{ \det ( L_z) ~:~ z \in \{0, 1\}^P,~ \| z \|_0= K \right\},
    \ee
where $L_z$ is a submatrix of $L$ restricted to rows and columns indexed by the one-components of $z$. It is well-known that the solution to problem~\eqref{eq:det} coincides with the MAP estimate of the DPP with a cardinality constraint~\citep{ref:kulesza2012determinantal}.  Further, it is crucial to highlight that problem~\eqref{eq:det} is a submodular maximization problem since the log-probability function $\log \det(L_z)$ is a submodular function~\citep{ref:gillenwater2012near}. Further, this problem is well-known to be NP-hard~\citep{ref:kulesza2012determinantal}, and thus it is notoriously challenging to solve~\eqref{eq:det} to optimality.~\citet{ref:chen2018fast} provides a greedy algorithm for the MAP estimation problem. The aforementioned greedy algorithm has been proven to achieve an approximation ratio of $\mc O(\frac{1}{k!})$~\citep{ref:civril2009selecting} and incur a computational complexity of $\mc O(K^2P)$. Moreover, to improve the solution quality, we introduce a 2-neighborhood local search strategy. This method involves an iterative process of exchanging one element from the current set with one element from the complementary set, continuing until no additional improvement can be achieved.

\section{Maximum Likelihood Estimation of the Tolerance Parameter}
\label{sec:app-estimation}

We provide the maximum likelihood estimation for the parameters $\kappa$. Without any loss of generality, we consider a training dataset consisting of $N$ items and $M$ users; the user embeddings $u_m$, and the item embeddings $v_i$ are given. The interactions between the users and the items are presented by a binary-valued data matrix $E \in \{0, 1\}^{M \times N}$ with each $E_{mi}$ admits values
    \[
    E_{mi} = \begin{cases}
        1 & \text{if user $m$ has an experience with item $i$}, \\
        0 & \text{otherwise.}
    \end{cases}
    \]
Suppose that there exists a global constant $\kappa \in \R_+$ such that $E_{mi}$ follows a Bernoulli random variable with
\[
\mathrm{Prob}( E_{mi} = 1 ) = w_i \times \mathrm{sigmoid}\big( \frac{1}{c_{mi}} - \frac{\kappa}{\sqrt{d} - c_{mi}}  \big),\]
where $c_{mi}$ is the embedding distance between the user the the item $c_{mi} = \| u_m - v_i \|_2$.
Given the data matrix $E$ and suppose that the elements $E_{mi}$ are jointly independent, the likelihood is
    \[
    L(\kappa | E) = \prod_{m=1}^M \prod_{i=1}^N \left( p_{mi} (\kappa)  \right)^{E_{mi}} \left( 1 - p_{mi}(\kappa)\right)^{1 - E_{mi}},
    \]
    where $p_{mi}(\kappa)$ is
    \[ 
    p_{mi}(\kappa) = \frac{w_i}{1 + \exp \big( \frac{\kappa}{\sqrt{d} - c_{mi}} - \frac{1}{c_{mi}} \big)}.
    \]
    The estimate $\hat \kappa_{\mathrm{MLE}}$ minimizes the negative log-likelihood:
    \begin{align*}
        &\min_{\kappa \ge 0}~\sum_{m=1}^M \sum_{i=1}^N   \log \left( 1 + \exp \big( \frac{\kappa}{\sqrt{d} - c_{mi}} - \frac{1}{c_{mi}} \big)\right) \\
        &  - 
        \sum_{m=1}^M \sum_{i=1}^N (1 - E_{mi}) \log \left( 1 + \exp \big( \frac{\kappa}{\sqrt{d} - c_{mi}} - \frac{1}{c_{mi}}   \big) - w_i\right),
    \end{align*}
    which can be found by standard gradient descent algorithms.

\section{Questionnaire Design} \label{sec:app-qs-design}

Inspired by the structure of the Netflix questionnaire~\citep{ref:kweon2020deep}, we devise our questionnaire methodology to capture a comprehensive set of preference pairs while minimizing user effort. Users are provided the option to skip specifying preferences, streamlining the process. In our questionnaire, users are presented with a product display, and while scrolling through, they only need to indicate `like' or `dislike' for products they are familiar with. An illustration of the questionnaire is provided in Figure~\ref{fig:questionnaire}. In practice, although our experimental design prompts new users to specify preferences for $100$ items, our algorithm performs effectively even when utilizing an average of around $15\%$ of user responses, evident by the user response ratio in Table~\ref{tab:like-dislike}.

\begin{figure*}[!ht]
    \centering
    \includegraphics[width=0.8\textwidth]{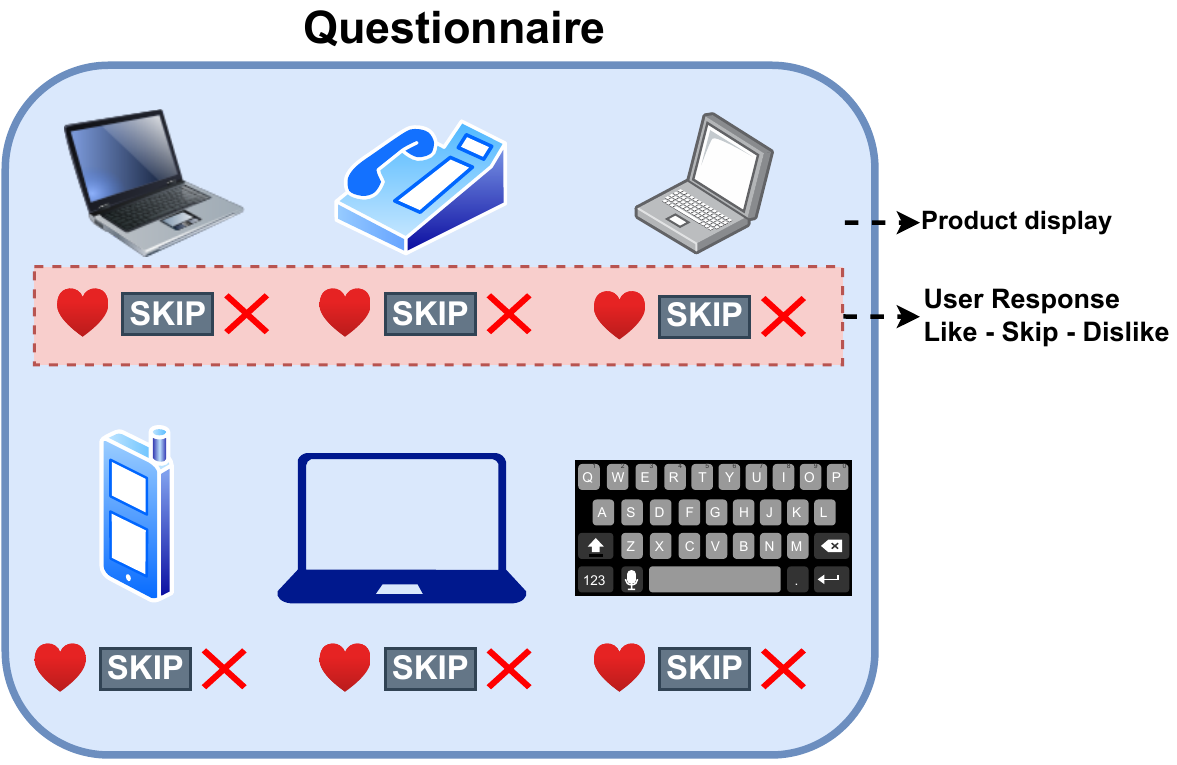}
    \caption{Illustration of our questionnaire: Taking inspiration from the Netflix questionnaire as outlined in~\cite{ref:kweon2020deep}, we structure each questionnaire as depicted above. Upon a new user entering our system, we prompt them to indicate their preferences for a set of items. Users can specify `like' ($+1$), `dislike' ($-1$), or choose to skip the item ($\NA$).}
    \label{fig:questionnaire}
\end{figure*}

\begin{table*}[!ht]
\centering
\caption{Number of items responded to by users using the PERE method. The response ratio is calculated over 100 queried items.}
\label{tab:like-dislike}

\pgfplotstabletypeset[
    col sep=comma,
    string type,
    every head row/.style={before row=\toprule,after row=\midrule},
    every row no 3/.style={after row=\midrule},
    every last row/.style={after row=\bottomrule},
    columns/data/.style={column name=Method, column type={l}},
    columns/data/.style={column name=HR@1, column type={l}},
    columns/data/.style={column name=AUC@10, column type={l}},
    columns/data/.style={column name=NDCG@10, column type={l}},
    columns/data/.style={column name=NDCG@30, column type={l}},
    columns/data/.style={column name=MAP, column type={l}},
    columns/data/.style={column name=MRR, column type={l}},
]{likes_NA.csv}
\end{table*}

\section{Additional numerical results} \label{sec:app-exp}
\subsection{Statistical Test}
For each user, we compute the recommendation metrics for our methods and baselines. We propose to test the hypotheses:
\begin{itemize}
    \item Null hypothesis: PERE’s NDCG@10 (or MAP, MRR) equals the competing method’s NDCG@10 (or MAP, MRR) 
    \item Alternative hypothesis: PERE’s NDCG@10 (or MAP, MRR) is larger than the competing method’s NDCG@10 (or MAP, MRR).
\end{itemize}

In order to test the above hypothesis, we use a one-sided Wilcoxon signed-rank test to compare the paired metric values. Suppose we choose the significant level at 0.05. Table~\ref{tab:stat-test} indicates that PERE significantly outperforms RMV, DPE, PEO, and DRE across all performance metrics. PERE outperforms c-DPP in almost all metrics except for the NDCG@10 and MAP in the Gowalla dataset. However, this does not imply that c-DPP's NDCG@10 and MAP are higher than PERE in the Gowalla dataset.

\begin{table}[ht]
    \centering
    \small
    \caption{Statistical tests of 3 recommendation metrics across Amazon-Books and Gowalla datasets.}
    \begin{tabular}{lccccccc}
    \hline
        Metrics & Datasets & PERE vs. RMV &	PERE vs. DPE & PERE vs. PEO & PERE vs. DRE & PERE vs. c-DPP \\ \hline
        NDCG@10  &  Amazon-Books & $3e-12$ & $4e-11$ & $5e-3$ & $2e-26$ & $0.014$ \\ 
           & Gowalla & $1e-14$ &	$3e-14$ &	$2e-3$ &	$5e-10$ &	$0.198$ \\ \hline
        MAP  &  Amazon-Books & $2e-6$ & $4e-5$ & $1e-3$ & $1e-33$ & $0.026$ \\ 
           & Gowalla & $3e-12$ & $3e-9$ & $9e-5$ & $3e-7$ & $0.087$ \\ \hline
        MRR  &  Amazon-Books & $4e-11$ & $4e-11$ & $9e-3$ & $2e-31$ & $0.016$ \\ 
           & Gowalla & $2e-9$ & $4e-7$ & $7e-4$ & $6e-6$ & $0.039$ \\ \hline
    \end{tabular}
    \label{tab:stat-test}
\end{table}

\subsection{Burn-in Phase Comparison} \label{sec:app-burn-in}
We use LightGCN / BiVAE for the burn-in phase to generate item embedding and conduct experiments on Gowalla and Amazon-Books datasets. We employ two widely recognized and straightforward baseline methods: RMV~\citep{ref:fonarev2016efficient} and $K$-Medoids~\citep{ref:liu2011wisdom}: RMV optimizes the volume of a rectangle matrix by selecting diverse yet orthogonal seed items in the embedding space. On the other hand, the $K$-Medoids algorithm, previously employed in a study~\citep{ref:liu2011wisdom}, identifies representative items through cluster centroids. We slightly modify the $K$-medoids algorithm by considering only the items belonging to the popular items as potential centroids. Note that sequential-based preference elicitation methods, such as DPE~\citep{ref:parapar2021diverse} or conditional DPP, are not applicable during the `burn-in' phase. In this phase, we aim to create a standardized questionnaire for all new users entering our system. Sequential-based methods, in contrast, involve asking new questions based on the responses of previous users.

Results for the burn-in phase are summarised in Table~\ref{tab:phase_1_full}. The results demonstrate that DPP (Determinantal Point Process) is the best approach for selecting initial items for the initial queries. DPP significantly outperforms baseline methods regarding performance metrics in all two datasets. The success of DPP can be attributed to its ability to effectively select a diverse set of items while considering the popularity score of each item. This combination allows DPP to balance diversity and relevance, resulting in superior performance compared to the baseline methods.

\begin{table*}[htb]
\centering
\small
\caption{Benchmark of performance metrics on Gowalla and Amazon-Books. Larger values are better. The best performance for any fixed number of questions is highlighted in bold. The number of items, in this case, is $K=50$ for all methods.}
\label{tab:phase_1_full}

\pgfplotstabletypeset[
    col sep=comma,
    string type,
    every head row/.style={before row=\toprule,after row=\midrule},
    every row no 3/.style={after row=\midrule},
    every last row/.style={after row=\bottomrule},
    columns/data/.style={column name=Method, column type={l}},
    columns/data/.style={column name=HR@1, column type={l}},
    columns/data/.style={column name=AUC@10, column type={l}},
    columns/data/.style={column name=NDCG@10, column type={l}},
    columns/data/.style={column name=NDCG@30, column type={l}},
    columns/data/.style={column name=MAP, column type={l}},
    columns/data/.style={column name=MRR, column type={l}},
]{Phase1_full.csv}
\end{table*}

Moreover, to show the effectiveness of our proposed sequential elicitation in Section~\ref{sec:solution}, we conduct an additional experiment that compares PERE, which uses a static 50-item questionnaire in the beginning, and a series of 5 dynamic 5-item questionnaires after that, with a baseline where only a burn-in questionnaire using DPP is utilized to create a static 100-item questionnaire. 
Table~\ref{tab:ablation_1_results} illustrates that the combination of a 50-item questionnaire along with a series of 5 dynamic 5-item questionnaires outperforms the 100-item questionnaire, which highlights the effectiveness of our PERE method.

\begin{table}[ht]
    \centering
    \caption{Comparison between a burn-in questionnaire using DPP and PERE with $100$ elicited items for each method on Amazon-Books dataset.}

    \begin{tabular}{lccc}
        \toprule
         Datasets & Method &  NDCG@10 $\uparrow$ & MRR $\uparrow$ \\
         \hline
         Gowalla & Burn-in & 0.1497 & 0.1335 \\
         - & PERE & \textbf{0.1806} & \textbf{0.1518} \\
         \hline
         Amazon-Books & Burn-in & 0.3388 & 0.3152 \\ 
         - & PERE & \textbf{0.3616} & \textbf{0.3235} \\   
         \bottomrule
    \end{tabular}   \label{tab:ablation_1_results}
\end{table}

\subsection{Greedy and DPP Comparison}
While the greedy method chooses the most popular item, we employ the Determinantal Point Process (DPP) in the `burn-in' phase to achieve a better balance between diversity and popularity. DPP is advantageous in scenarios where preferences may diverge from mainstream popularity, ensuring a tailored and inclusive experience. Table~\ref{tab:greedy} demonstrates that our method is more effective than the greedy method in constructing a personalized questionnaire for new users with 100 elicited items.
\begin{table}[ht]
    \centering
    \caption{Comparison between PERE and Greedy method on Amazon-Books dataset.}
    \begin{tabular}{lccc}
    \hline
        Methods & NDCG@10 $\uparrow$ & MAP $\uparrow$ & MRR $\uparrow$ \\ \hline
        Greedy &  0.3415 & 0.198 & 0.3043  \\ 
        PERE  & \textbf{0.3616} & \textbf{0.2930} & \textbf{0.3235} \\ \hline
    \end{tabular}
    \label{tab:greedy}
\end{table}

\section{Main Experiment Setting}
\subsection{Datasets Description}
In this paper, we use the Gowalla~\citep{ref:cho2011friendship} dataset and the Amazon-Books~\citep{ref:ni2019justifying} dataset. We report the statistics of Gowalla and Amazon-Books datasets in Table~\ref{tab:data-stat}. The description for each dataset is the following:
\begin{itemize}
    \item Gowalla is a location-based dataset that contains information about user check-ins at various locations. 
    \item Amazon-Books is a subset of the Amazon Product Review dataset, specifically centered on book products. This dataset comprises reviews and user ratings for various products.
\end{itemize}

\begin{table}[ht]
    \centering
    \caption{Characterisitics of datasets used in our experiments.}
    \begin{tabular}{lcccc}
    \hline
        Dataset & Train User \# & Item \#  & Interaction \#  & Density \\ \hline
        Gowalla & 28858 & 40981 & 1027370 & 0.00084 \\ 
        Amazon-Books  & 51643 & 91599 & 2984108 & 0.00062 \\ \hline
    \end{tabular}
    \label{tab:data-stat}
\end{table}

Amazon-Books includes explicit and implicit user responses to book products, whereas Gowalla exclusively provides implicit information indicating user preferences toward different locations. We employ two well-known methods to generate collaborative filtering embeddings for items: LightGCN and biVAE. LightGCN is trained solely to predict user-item interactions, making it suitable for datasets with implicit responses. On the other hand, biVAE is designed to predict specific ratings for user-item interactions, which necessitates explicit responses. Given that Gowalla contains only implicit responses, we exclusively use LightGCN on this dataset. However, since Amazon-Books contains explicit and implicit responses, we can utilize LightGCN and biVAE on this dataset. 

\subsection{Baseline Description}
This paper uses seven baselines, which can be divided into fixed and sequential questionnaire generation methods.
Fixed questionnaire generation methods include:
\begin{itemize}
    \item RMV: Please refer to Section~\ref{sec:app-burn-in}.
    \item $K$-medoids: Please refer to Section~\ref{sec:app-burn-in}.
    \item DRE: initially, this method defines a categorical distribution for sampling seed items from the entire item pool. Subsequently, it simultaneously learns the categorical distributions and a neural reconstruction network to infer users' preferences based on collaborative filtering (CF) information from the sampled seed items. Then, the encoder is utilized to select the seed items, while the decoder is used to recommend the favorite items.
    \item DPP: Please refer to Section~\ref{sec:app-dpp}.
\end{itemize}
Sequential questionnaire generation methods include:
\begin{itemize}
    \item PEO: This method presents a novel elicitation approach to construct a static preference questionnaire. It formulates the task of generating preference questionnaires, encompassing relative questions for new users as an optimization problem that can be solved in linear time of the number of items.
    \item Conditional DPP: Conditional DPP is a modified version of DPP that selects $K$ items from the remaining set of items.
    \item DPE: This preference elicitation model employs multi-armed bandits to diversify the seed item set through topic and item diversity.
\end{itemize}
\subsection{Implementation Details} 
We use the standard codebase of LightGCN\footnote{\url{https://github.com/gusye1234/LightGCN-PyTorch}} and cornac implementation of biVAE\footnote{\url{https://github.com/recommenders-team/recommenders/tree/main}} to generate item embedding and new user embedding. Afterward, we generate a new user according to Section 4 and use it as ground truth in our evaluation. This characteristics generation is necessary because we want to model experience probability that allows users to skip a question ($\NA$ response) in our questionnaire.  

\section{Inconsistency in Elicitation} \label{sec:app-inconsistency}

In this section, we introduce a method that can tweak the Chebyshev center to account for the inconsistent elicitation. Let $|\mbb P|$ denote the cardinality of the set $\mbb P$.  Suppose we tolerate $\tau$\% of inconsistency, i.e., at most $\tau | \mbb P|$ preferences in the set $\mbb P$ that can be violated. We define $\mc U_{\mbb P}^\tau$ as the set of vectors $u_c$ with at most $\tau\%$ inconsistency with the preference set $\mbb P$. This set can be represented using auxiliary binary variables as 
\[
\mc U_{\mbb P}^\tau = 
\left\{
u \in \mbb H: 
\begin{array}{l}
\exists \gamma_{ij} \in \{0, 1\}~~\forall v_i \succsim v_j \in \mbb P \\
\sum_{(i,j) \in \mbb P} \gamma_{ij} \le \tau | \mbb P| \\
2 u_{c}^\top (v_j - v_i) \le \|v_j\|_2^2 - \|v_i\|_2^2 + \gamma_{ij} \mathds{M}
\end{array}
\right\},
\]
where $\mathds{M}$ is a big-M constant. Intuitively, $\gamma_{ij}$ is an indicator variable: $\gamma_{ij}=1$ implies that the preference is inconsistent.

\begin{theorem}[Chebyshev center with inconsistent elicitation] \label{thm:chebyshev2}
Given a tolerance parameter $\tau \in (0, 1)$. The Chebyshev center $u_c\opt$ of the set $\mc U_{\mbb P}$ can be found by solving the following problem
\[
    \begin{array}{cl}
         \max & r \\
         \st & 2 u_{c}^\top (v_j - v_i) + 2 r \| v_j - v_i \|_2 \le \|v_j\|_2^2 - \|v_i\|_2^2 + \gamma_{ij} \mathds{M} ~\forall v_i \succsim v_j \in \mbb P \\
         & \sum_{(i,j) \in \mbb P} \gamma_{ij} \le \tau | \mbb P | \\
             & u_c \in \mbb H,~r \in \R_+,~\gamma_{ij} \in \{0, 1\}~\forall v_i \succsim v_j \in \mbb P ,
    \end{array}
\]
where $\mathds{M}$ is a big-M constant.
\end{theorem}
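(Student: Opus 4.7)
The plan is to mirror the proof of Theorem~\ref{thm:chebyshev}, adapting the robust-optimization argument to accommodate the added binary indicators $\gamma_{ij}$ that flag which preference constraints are relaxed. I would start from the Chebyshev center definition: the largest ball of radius $r$ centered at $u_c$ that is inscribed in $\mc U_{\mbb P}^\tau$ corresponds to choosing $(u_c, r, \gamma)$ such that $u_c + \delta \in \mc U_{\mbb P}^\tau$ holds for every perturbation $\delta \in \mc B_r = \{\delta \in \R^d : \|\delta\|_2 \le r\}$. The cardinality constraint $\sum_{(i,j) \in \mbb P} \gamma_{ij} \le \tau |\mbb P|$ is imposed once and shared across all points of the ball, consistent with the formulation in the theorem where a single relaxation pattern is selected and the ball is then inscribed inside the associated polyhedron.

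With $\gamma_{ij}$ fixed and independent of $\delta$, each preference-pair constraint becomes a semi-infinite constraint of the form
\begin{equation*}
2(u_c + \delta)^\top (v_j - v_i) \le \|v_j\|_2^2 - \|v_i\|_2^2 + \gamma_{ij}\mathds{M} \quad \forall \delta \in \mc B_r.
\end{equation*}
Following the same manipulation as in Theorem~\ref{thm:chebyshev}, this rewrites as
\begin{equation*}
2 u_c^\top (v_j - v_i) + 2\sup_{\|\delta\|_2 \le r} \delta^\top (v_j - v_i) \le \|v_j\|_2^2 - \|v_i\|_2^2 + \gamma_{ij}\mathds{M},
\end{equation*}
and self-duality of the Euclidean norm yields $\sup_{\|\delta\|_2 \le r}\delta^\top (v_j - v_i) = r\|v_j - v_i\|_2$. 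Substituting back produces exactly the inequality asserted in the theorem. Conjoining the cardinality constraint, the binary restrictions on $\gamma_{ij}$, the hypercube constraint on $u_c$, and the nonnegativity of $r$ recovers the claimed mixed-integer program.

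The main subtlety I anticipate is justifying that a single choice of $\gamma_{ij}$ applies uniformly across the ball rather than varying pointwise with $\delta$. Because $\mc U_{\mbb P}^\tau$ is a union, over admissible $\gamma$, of the polyhedra $\{u : 2u^\top(v_j - v_i) \le \|v_j\|_2^2 - \|v_i\|_2^2 + \gamma_{ij}\mathds{M}~\forall v_i \succsim v_j \in \mbb P\}$, the theorem in effect seeks the largest Euclidean ball contained in one such polyhedron, while optimizing over $\gamma$ selects the best polyhedron. This is a safe (inner) approximation of the Chebyshev radius of the nonconvex union and is the natural tolerant counterpart to $\mc U_{\mbb P}$. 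Once this modeling choice is articulated, the remainder is a pointwise reapplication of the duality step from Theorem~\ref{thm:chebyshev}, applied independently to each preference-pair constraint.
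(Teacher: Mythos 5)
Your proof is correct and follows essentially the same route as the paper's: rewrite the ball-inscription condition as a semi-infinite constraint over $\delta \in \mc B_r$ for each preference pair (with $\gamma_{ij}$ held fixed), apply self-duality of the Euclidean norm to evaluate the supremum as $r\|v_j - v_i\|_2$, and substitute back. Your added remark on why a single relaxation pattern $\gamma$ is shared across the whole ball (so the program inscribes the ball in one polyhedron of the union defining $\mc U_{\mbb P}^\tau$) is a useful clarification that the paper's proof leaves implicit, but it does not change the argument.
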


\begin{proof}
    The optimization problem to find the Chebyshev center and its radius can be rewritten as
    \[
        \begin{array}{cl}
        \max & r \\
         \st & 2 (u_{c} + \delta)^\top (v_j - v_i) \le  \|v_j\|_2^2 - \|v_i\|_2^2 + \gamma_{ij} \mathds{M}~\forall \delta \in \mc B_r,~\forall v_i \succsim v_j \in \mbb P \\
         & \sum_{(i,j) \in \mbb P} \gamma_{ij} \le \tau | \mbb P | \\
            & u_c \in \mbb H,~r \in \R_+,~\gamma_{ij} \in \{0, 1\}~\forall v_i \succsim v_j \in \mbb P ,
        \end{array}
    \]
    where $\mc B_r = \{\delta \in \R^d: \| \delta \|_2 \le r\}$ is a $d$-dimensional Euclidean ball of radius $r$. Pick any preference $v_i \succsim v_j \in \mbb P$, the semi-infinite constraint 
    \[
        2 (u_{c} + \delta)^\top (v_j - v_i) \le \|v_j\|_2^2 - \|v_i\|_2^2 + \gamma_{ij} \mathds{M} ~\forall \delta \in \mc B_r
    \]
    is equivalent to the robust constraint
    \[
    2 u_{c}^\top (v_j - v_i) + 2 \sup_{\|\delta\|_2 \le r} \delta^\top (v_j - v_i) \le \|v_j\|_2^2 - \|v_i\|_2^2 + \gamma_{ij} \mathds{M}.
    \]
    Because the Euclidean norm is a self-dual norm, we have
    \[
        \sup_{\|\delta\|_2 \le r} \delta^\top (v_j - v_i) = r \|v_j - v_i\|_2.
    \]
    Substituting the above relationship to the optimization problem completes the proof.
\end{proof}

\end{document}